\documentclass[a4paper,UKenglish,cleveref, autoref, thm-restate]{lipics-v2021}



\bibliographystyle{plainurl}

\title{The Support of Bin Packing is Exponential} 


\author{Klaus Jansen}{Kiel University, Department of Computer Science,  Germany}{kj@informatik.uni-kiel.de}{https://orcid.org/0000-0001-8358-6796}{}
\author{Felix Ohnesorge}{Kiel University, Department of Computer Science,  Germany}{foh@informatik.uni-kiel.de}{https://orcid.org/0009-0003-8023-3380}{}
\author{Lis Pirotton}{Kiel University, Department of Computer Science,  Germany}{lpi@informatik.uni-kiel.de}{https://orcid.org/0009-0001-4984-3696}{}
\author{Malte Tutas}{Kiel University, Department of Computer Science,  Germany}{mtu@informatik.uni-kiel.de}{https://orcid.org/0000-0002-1360-4634}{}

\authorrunning{K. Jansen, L. Pirotton and M. Tutas}

\Copyright{K. Jansen, F. Ohnesorge, L. Pirotton and M. Tutas} 

\begin{CCSXML}
<ccs2012>
   <concept>
       <concept_id>10003752.10003809.10003716.10011141.10010045</concept_id>
       <concept_desc>Theory of computation~Integer programming</concept_desc>
       <concept_significance>500</concept_significance>
       </concept>
   <concept>
       <concept_id>10003752.10003809.10003636.10003810</concept_id>
       <concept_desc>Theory of computation~Packing and covering problems</concept_desc>
       <concept_significance>300</concept_significance>
       </concept>
 </ccs2012>
\end{CCSXML}

\ccsdesc[500]{Theory of computation~Integer programming}
\ccsdesc[300]{Theory of computation~Packing and covering problems} 

\keywords{Bin Packing, Integer Programming, Support} 

\category{} 

\relatedversion{} 


\funding{This work was supported by the German Research Foundation (DFG) - Project DFG JA 612/27-1}

\acknowledgements{The idea for this work originated during a visit of Klaus Jansen to the MPI - INF Saarbrücken in 2024.}

\nolinenumbers 

\usepackage{amsmath}
\usepackage{amssymb}	
\usepackage{mathtools}
\usepackage{nicefrac}
\usepackage{xfrac}
\usepackage{xspace}
\usepackage{comment}
\usepackage{dsfont}
\usepackage{marvosym}
\usepackage{colortbl}

\usepackage{graphicx} 
\newtheorem{constraints}{Constraints}
\crefname{constraints}{Constraints}{Constraints}
\crefname{lemma}{Lemma}{Lemmas}
\crefname{observation}{Observation}{Observations}
\usepackage{mathtools}
\usepackage{tikz}
\definecolor{subbi}{rgb}{0.53, 0.81, 1}
\definecolor{color2}{rgb}{0.95, 0.53, 1 }
\definecolor{colora3}{rgb}{1, 0.72, 0.53 }
\usetikzlibrary{shapes.misc}

\tikzset{cross/.style={cross out, draw=black, minimum size=2*(#1-\pgflinewidth), inner sep=0pt, outer sep=0pt},
cross/.default={1pt}}

\usepackage{algorithm}
\usepackage[noend]{algpseudocode}

\begin{document}

\maketitle

\begin{abstract}
      Consider the classical Bin Packing problem with $d$ different item sizes $s_i$ and amounts of items $a_i.$ The support of a Bin Packing solution is the number of differently filled bins. In this work, we show that the lower bound on the support of this problem is $2^{\Omega(d)}$. Our lower bound  matches the upper bound of $2^d$ given by Eisenbrand and Shmonin~[Oper.Research Letters '06] up to a constant factor.

      \noindent This result has direct implications for the time complexity of several Bin Packing algorithms, such as Goemans and Rothvoss [SODA '14], Jansen and Klein [SODA '17] and Jansen and Solis-Oba [IPCO '10].
   
     \noindent To achieve our main result, we develop a technique to aggregate equality constrained ILPs with many constraints into an equivalent ILP with one constraint. Our technique contrasts existing aggregation techniques as we manage to integrate upper bounds on variables into the resulting constraint. We believe this technique can be useful for solving general ILPs or the $d$-dimensional knapsack problem.
\end{abstract}
\clearpage
\section{Introduction}
In their seminal work~\cite{EisenbrandS06}, Eisenbrand and Shmonin inspect integer conic combinations $b\in \mathbb{Z}^d$ of a finite set of integer vectors $X\subset \mathbb{Z}^d.$ They provide upper bounds on the size of the smallest subset $X^*\subseteq X$ such that $b$ is an integer conic combination of elements in $X^*.$ This can be interpreted in the context of integer programming.
Integer programs are composed of a matrix $A\in \mathbb{Z}^{d\times n},$ an (optional) cost vector $c\in \mathbb{Z}^n$, a target vector $b\in \mathbb{Z}^d$ and a solution vector $x\in \mathbb{Z}^n$. The goal is to compute the solution vector satisfying $Ax=b$ while minimizing $c^Tx.$ In this context, the set of integer vectors corresponds to the columns of $A$ and the size of the smallest subset $X^*$ is the number of non-zero entries (the \textit{support}) in the solution vector. More precisely, they show that for any polytope $P\subseteq \mathbb{R}^d$ and any integral vector $x\in \mathbb{Z}^{|P\cap \mathbb{Z}^{d}|}_{\geq 0}$ of multiplicities there exists an $x^*\in \mathbb{Z}^{|P\cap \mathbb{Z}^{d}|}_{\geq 0}$ such that $|\text{supp}(x^*)|\leq 2^d$ and $\sum_{p\in P\cap \mathbb{Z}^d}x_p\cdot p  =\sum_{p\in P\cap \mathbb{Z}^d} x^*_p\cdot p.$ They achieve this through a sophisticated exchange argument, transforming a solution with a large support into one with a bounded support.
They also show a second bound of $O(d\log(d\Delta)),$ where $\Delta=\|A\|_\infty$ is the largest absolute value in $A$, in a similar fashion. These celebrated results have found application in a variety of applications, including classical problems of computer science like the \textsc{Bin Packing} problem (using the first bound)~\cite{GoemansR20,JansenK20,JansenS11} and \textsc{scheduling on identical} and \textsc{uniform machines} (using the second bound)~\cite{Jansen10,JansenKV20}.

\begin{figure}[t]
    \centering
    \begin{tikzpicture}
        \draw (0,3) -- (0,0) -- (1.5,0)--(1.5,3);
        \draw (2,3) -- (2,0) -- (3.5,0)--(3.5,3);
        \draw (4,3) -- (4,0) -- (5.5,0)--(5.5,3);
        \draw (6,3) -- (6,0) -- (7.5,0)--(7.5,3);
        \node at (-0.2,3) {$C$};
        \draw[dashed] (-0.1,3) -- (7.7,3);
        \draw [fill = white!75!black, fill opacity = 0.7] (0,0) rectangle (1.5,1.2);
        \draw [fill = white!95!black, fill opacity = 0.7] (0,1.2) rectangle (1.5,2.2);
        \draw [fill = white!55!black, fill opacity = 0.7] (0,2.2) rectangle (1.5,3);

        \draw [fill = white!75!black, fill opacity = 0.7] (2,0) rectangle (3.5,0.9);
        \draw [fill = white!95!black, fill opacity = 0.7] (2,0.9) rectangle (3.5,2.1);
        \draw [fill = white!55!black, fill opacity = 0.7] (2,2.1) rectangle (3.5,2.9);

        \draw [fill = white!75!black, fill opacity = 0.7] (4,0) rectangle (5.5,1.5);
        \draw [fill = white!55!black, fill opacity = 0.7] (4,1.5) rectangle (5.5,2.7);

        \draw [fill = white!75!black, fill opacity = 0.7] (6,0) rectangle (7.5,0.3);
        \draw [fill = white!95!black, fill opacity = 0.7] (6,0.3) rectangle (7.5,1.3);
        \draw [fill = white!55!black, fill opacity = 0.7] (6,1.3) rectangle (7.5,2.9);

        \draw [fill = white!75!black, fill opacity = 0.7] (8,0) rectangle node [midway] {$s_1$} (9.5,0.2);
        \draw [fill = white!95!black, fill opacity = 0.7] (8,1) rectangle node [midway] {$s_2$} (9.5,1.2);
        \draw [fill = white!55!black, fill opacity = 0.7] (8,2) rectangle node [midway] {$s_3$} (9.5,2.4);

    \end{tikzpicture}
    \caption{An illustration of the \textsc{Bin Packing} problem with three item types, four bins and capacity $C$. Colors indicate item types. Each bin has a unique \textit{configuration}, i.e., combination of items, assigned to it. Only the leftmost bin is completely filled. The sizes of each item type are shown on the right.}
    \label{fig:BPExample}
\end{figure}
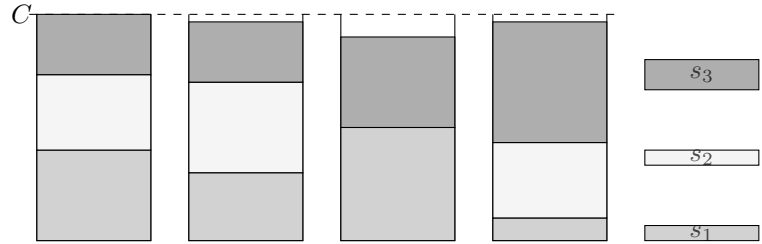
Improvements on the first bound would have direct implications for the time complexity of several algorithms. Consider the high-multiplicity \textsc{Bin Packing} problem: We are given item sizes $s_1,\ldots s_d\in (0,C]$, multiplicities $a_1, \ldots, a_d \in \mathbb{Z}_{\geq 0}$, and a number \(B \in \mathbb{Z}_{\geq0}\).
The task is to decide whether all items can be packed into $B$ bins of size $C$, see \Cref{fig:BPExample} for an illustration.
For this problem, Jansen and Solis-Oba~\cite{JansenS11} give an additive approximation algorithm with guarantee OPT+1. The time complexity of their algorithm is $d^{O(d2^d)}\cdot 2^{O(8^d)}\cdot \text{poly}(enc),$ where $\text{poly}(enc)$ represents a polynomial in the input. In their paper, they ask whether the support bound by Eisenbrand and Shmonin can be improved to be polynomial in $d$. Consequently, this would reduce the dependency on $d$ in the algorithm to be singly-exponential, i.e., result in a running time of $2^{\text{poly}(d)}\cdot\text{poly}(enc).$
\begin{figure}[t]
    \centering
    \begin{tikzpicture}
        \draw[->] (0,0) -- (0,3);
        \draw[->] (0,0) -- (6,0);
        \draw (5,2.5) node[cross=3pt] {};
        \node at (6,2.5) {$t=\begin{pmatrix}
                    5 \\
                    10
                \end{pmatrix}$};
        \node at (-0.3,3) {$x_1$};
        \node at (6,-0.3) {$x_2$};
        \draw[->] (7,2) -- (7.5,2.5);
        \draw[->] (7,1.5) -- (8,1.5);
        \draw[->] (7,0) -- (8.5,1);
        \node at (9.5,2.5) {$p_1=\begin{pmatrix}
                    1 \\1
                \end{pmatrix}$};
        \node at (9.5,1.5) {$p_2=\begin{pmatrix}
                    0 \\2
                \end{pmatrix}$};
        \node at (9.5,0.5) {$p_3=\begin{pmatrix}
                    2 \\3
                \end{pmatrix}$};
        \draw[->] (0,0)  -- node [midway, above=2pt] {$p_3$} (1.5,1);
        \draw[->] (1.5,1)  -- node [midway, yshift=0.1cm, xshift=-0.25 cm] {$p_1$} (2,1.5);
        \draw[->] (2,1.5)  -- node [midway, below=2pt] {$p_2$} (3,1.5);
        \draw[->] (3,1.5)  -- node [midway, yshift=0.1cm, xshift=-0.25 cm] {$p_1$} (3.5,2);
        \draw[->] (3.5,2)  -- node [midway, below=2pt] {$p_2$} (4.5,2);
        \draw[->] (4.5,2)  -- node [midway, yshift=0.1cm, xshift=-0.25 cm] {$p_1$} (5,2.5);
    \end{tikzpicture}
    \caption{An illustration of the \textsc{Cone and Polytope Intersection} problem with three points $p_i\in P$ and the target $\{t\} = Q.$}
    \label{fig:CAPIExample}
\end{figure}

Following this result, Goemans and Rothvoss~\cite{GoemansR20} provide an algorithm running in time $enc(s,a,C)^{2^{O(d)}}.$ Here, $enc(s,a,C)$ denotes the bitsize of encoding the size vector $s$, the multiplicity vector $a$ and the bin capacity $C$ in binary. The time complexity of this algorithm would also be improved by a smaller support bound. For example, a support bound polynomial in $d$ directly improves their running time to $ enc(s,a,C)^{\text{poly}(d)}.$ They achieve this result by solving a more general problem, the \textsc{Cone and Polytope Intersection} (CAPI) problem. Here, we are given two polytopes $P,Q\subseteq \mathbb{R}^d$, where $P$ is bounded. The task is to decide whether there is a point in $Q$ that can be expressed as a non-negative integer combination of integer points in $P$. Goemans and Rothvoss show that this problem captures the essence of high-multiplicity \textsc{Bin Packing}. They reduce \textsc{Bin Packing} to CAPI by setting the integer points in $P=\{\binom{x}{1}\in \mathbb{R}^{d+1}_{\geq 0}| s^Tx\leq C\}$ to every possible configuration, i.e., combinations of items that fit in a single bin. They set $Q=\{\binom{a}{B}\}$ to be the point that corresponds to all items being taken. The final entry in $P$ and $Q$ is used to count the number of used bins $B.$

Jansen and Klein~\cite{JansenK20} extend the work of Goemans and Rothvoss. They give an algorithm for \textsc{Bin Packing} that depends on the number of vertices $|V|$ in the underlying knapsack polytope of a single bin. Their algorithm has a running time of $|V|^{2^{O(d)}} \cdot enc(s,a,C)^{O(1)}$. Thus, it improves upon Goemans and Rothvoss' result when $|V|$ is small. However, it does not improve upon the result by Goemans and Rothvoss in the general sense, as $|V| = \Omega(enc(s,a,C))^d$ as shown by Hayes and Larman~\cite{HayesL83}.

All three of these results could be improved with a polynomial (in $d$) support bound for \textsc{Bin Packing}. However, our result shows that such a bound cannot be achieved, and thus answers the question posed by Jansen and Solis-Oba negatively. Goemans and Rothvoss made first strides in this direction in~\cite{GoemansR20}, where they show that the bound given by Eisenbrand and Shmonin is tight for polytopes, i.e., there is an exponential support when reaching a target with only column vectors or linear combinations of them. They construct a set of column vectors $X$ with dimension $d$ and $|X|=2^{d-1}$ and define the polytope $P=conv(X),$ where $conv(X)$ is the convex hull of $X$. They then define $t=\sum_{x\in X}x$ and show that $t$ can only be reached by taking every element of $X$ exactly once. This shows a lower bound on the support of $2^{d-1}=2^{\Omega(d)}.$ However, their bound does not admit a knapsack constraint and can therefore not be applied to the \textsc{Bin Packing} problem directly. This is because the columns in the \textsc{Bin Packing} ILP originate from a knapsack constraint that describes the feasible packings of a single bin.

With so many different problems gaining important information from support bounds, it should come as no surprise that there are other bounds on the support of integer programs with different parameterizations. Berndt, Brinkop, Jansen, Mnich and Stamm~\cite{BJMS23} show that the support is also bounded by $d\cdot (\log(3\|A\|_1)+\sqrt{\log(\|A\|_1)})$. They complement this result by showing a lower bound on the size of the support of $d\cdot (\lfloor\log(\|A\|_1\rfloor+1).$ Aliev, de Loera, Oertel and O'Neill~\cite{AlievLOO17} give two support bounds for linear diophantine equations of $r(A)+\lfloor\log (H(A))\rfloor$, where $r(A)$ is the column rank of the matrix $A$ and $H(A)$ is the determinant of the span of the $r(A)$-dimensional sublattice of $\mathbb{Z}^n$ formed by all integer points in the linear subspace spanned by the columns of $A$, i.e., $H(A)=\det(\text{span}_{\mathbb{R}}(A)\cap \mathbb{Z}^n).$ Their second bound is $\lfloor2d\log(2\sqrt{d}\Delta)\rfloor$. Aliev, De Loera, Eisenbrand, Oertel and Weismantel~\cite{AlievLEOW18} show the bound of $\lfloor2d\log(2\sqrt{d}\Delta)\rfloor$ for general ILPs.

A main tool we use in the following is \emph{aggregation.} Aggregation is a powerful tool that transforms an ILP with multiple constraints into a single constraint. Research into this technique began in the 70's, resulting in several different methods, see e.g.~\cite{BRADLEY71,BradleyHW74,elmaghraby1970treatment,GloverB95,GloverW72,kendall77,Padberg72,Poirion19,ROSENBERG74}. Elmaghraby and Wig~\cite{elmaghraby1970treatment} iteratively combine two equations. Applying this technique repeatedly yields a single constraint. Another approach is given by Padberg~\cite{Padberg72}. Instead of combining two equations iteratively, he aggregates all equations at the same time.

Our approach is based on the techniques by Padberg~\cite{Padberg72}. We also aggregate all constraints in a single step. However, we manage to include upper bounds on the variables inside the resulting knapsack constraints, where Padberg's method leaves them outside the constraint. We believe that this type of aggregation can find application in different contexts, such as $d$-dimensional knapsack problems or general ILPs with $d$~constraints. Here, the aggregation may yield faster algorithms, especially when one manages to reduce the size of the coefficients. This may be done by using techniques by Frank and Tardos~\cite{FrankT87}, Eisenbrand, Hunkenschröder, Klein, Koutecký Levin and Onn~\cite{EisenbrandHKKLO23} or Jansen, Kahler and Zwanger~\cite{JansenKZ25}.

On the negative side, it is known that aggregation is not feasible for a set of inequalities, as was proven by Chvátal and Hammer~\cite{chvtal1977aggregation}. They provide a simple two-line ILP that cannot be aggregated.

\subsection{Our Contribution}

In this work, we continue research into the support of \textsc{Bin Packing} solutions. 
We show that the support of the high-multiplicity \textsc{Bin Packing} problem is exponential, requiring at least $2^{\Omega(d)}$ many different configurations in an optimum solution. This yields our main result:
\begin{restatable}{theorem}{mainthm}\label{thm:main}
    There exists a high-multiplicity \textsc{Bin Packing} instance of dimension $d$ such that the solution-vector $x$ (a vector of configurations) contains at least $2^{\Omega(d)}$ non-zero entries, i.e., $|\text{supp}(x)| \in 2^{\Omega(d)}.$
\end{restatable}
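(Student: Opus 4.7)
The plan is to lift the Goemans--Rothvoss exponential-support lower bound for \CAPIP into a hardness statement for the bin packing ILP via the aggregation technique announced in the introduction. I would first invoke the construction of Goemans and Rothvoss to obtain a set $X = \{x^{(1)}, \dots, x^{(N)}\} \subset \mathbb{Z}^d$ with $N = 2^{d-1}$ and target $t = \sum_i x^{(i)}$, such that the unique non-negative integer combination of $X$ summing to $t$ is the all-ones combination. Viewed as an ILP this is a system of $d$ equality constraints in $N$ variables with $0/1$ upper bounds whose only feasible solution is the all-ones vector; in \CAPIP-language, the polytope $P = \mathrm{conv}(X)$ has exponential support at the target $t$.

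Next, I would apply the new aggregation technique to convert these $d$ equality constraints, together with the upper bounds $x_i \le 1$, into a single equality constraint of the form $s^\top x = C$. Standard Padberg-style aggregation with multipliers $(1,M,M^2,\dots,M^{d-1})$ for a sufficiently large $M$ would collapse the $d$ equalities into one; the novelty, and the main technical obstacle, is embedding the upper bounds $x_i \le 1$ inside this single equality rather than keeping them as side constraints. I expect this to be achieved by introducing $O(d)$ auxiliary slack-like variables with carefully chosen coefficients (again powers of some suitable base), so that the resulting single equation is in effect a knapsack equation in $O(d)$ variables whose non-negative integer solutions are in bijection with the feasible points of the original multi-constraint system.

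I would then read this single equality $s^\top x = C$ as the knapsack constraint defining the feasible packings, i.e., configurations, of one bin in a \BPP instance with item sizes $s$ and bin capacity $C$. By choosing the multiplicity vector $a$ to encode the target $t$ (augmented with the correct values for the auxiliary variables), the \BPP instance becomes: find a non-negative integer combination of the integer points of this knapsack polytope equal to $a$. By construction, the relevant integer points correspond to the vectors of $X$, possibly together with trivial padding configurations that cannot substitute for any $x^{(i)}$ in reaching $a$. The uniqueness of the Goemans--Rothvoss combination then forces any feasible solution to use at least one bin per vector of $X$, and hence at least $2^{d-1}$ distinct configurations. Since the produced \BPP instance has dimension $O(d)$, this yields the claimed $|\supp{x}| \in 2^{\Omega(d)}$ bound.

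The hard part will be controlling the aggregation so that the single knapsack equation produces no spurious integer solutions that could serve as shortcut configurations and lower the support. This is precisely where the upper bounds on the variables must be baked into the constraint itself, which is what distinguishes the approach from Padberg's aggregation. A secondary technical task is to verify that the coefficients $s$ and the capacity $C$ obtained from the aggregation can be chosen so that the final \BPP instance genuinely has $O(d)$ item types and still admits the multiplicity vector $a$ as a valid target; a suitable choice of the base $M$ (polynomial in $d$ and in the entries of the Goemans--Rothvoss construction) should suffice, preserving the $2^{\Omega(d)}$ lower bound in the stated dimension.
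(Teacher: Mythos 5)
There is a genuine gap: you aggregate the wrong system, and as a consequence the dimension of the resulting \textsc{Bin Packing} instance explodes. The Goemans--Rothvoss system $\sum_{i=1}^{N}\mu_i x^{(i)}=t$ with $\mu_i\le 1$ has $d$ equality constraints but $N=2^{d-1}$ \emph{variables}; any aggregation that absorbs the upper bounds needs a slack variable per bounded variable (so $\Theta(2^{d})$ auxiliary variables, not the $O(d)$ you assert), and the single knapsack equation you obtain lives in $\Theta(2^{d})$ variables. Reading that equation as the bin constraint gives a \textsc{Bin Packing} instance with exponentially many item types, in which a support of $2^{d-1}$ is at most linear in the number of item types and proves nothing. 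Moreover, the integer points of the knapsack polytope you construct are \emph{not} ``the vectors of $X$'' as you claim: the $x^{(i)}$ are the data (coefficient columns) of the aggregated system, not its solutions, and the unique tight solution of your aggregated equation is the single point $\mu=\onevec$ (with its slack values), so your bin admits exactly one full configuration and the roles of configurations and multiplicities are inverted.

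The missing idea is that the object to be aggregated is not the target system but a \emph{recognizer}: an ILP with $O(d)$ variables and $O(d)$ constraints whose feasible integer solutions are in bijection with the $2^{d}-1$ special points, so that after aggregation those points become exactly the integer points of an $O(d)$-dimensional knapsack polytope that fill a bin with equality. The paper builds such a recognizer (\cref{con: original}) by first modifying the Goemans--Rothvoss points to $(x_1,\dots,x_d,\gamma^{\sum_{\ell}2^{\ell}x_\ell})$ — the uniqueness of the conic representation must then be re-proved for this modified set (\cref{lem: Malte}) — and by encoding, via auxiliary variables $r(\ell),y(\ell),z(\ell)$ simulating a decimal-to-binary conversion, the assertion that $r(d)=\gamma^{i}$ for some integer $i\ge 1$ and that $x^{\texttt{bin}}$ is the binary encoding of $i$ (\cref{lem:enc i,lem:i in N}). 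Only this $O(d)\times O(d)$ system is fed into the aggregation of \cref{sec:aggregation}, yielding a bin of capacity $C$ in dimension $O(d)$ whose $2^{d}-1$ tight configurations are the modified points; a final counting argument over the total size $kC$ excludes the non-tight configurations. Your proposal contains no analogue of this recognizer step, and without it the approach does not go through.
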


With this result, we answer an open question posed by Jansen and Solis-Oba in~\cite{JansenS11}. They provide an algorithm for the cutting stock problem. The running time of this algorithm could be improved to be singly-exponential if the support of \textsc{Bin Packing} is not exponential. However, our result answers this question negatively. In a similar vein, our result shows that the algorithm by Goemans and Rothvoss~\cite{GoemansR20} can not be directly improved. The same holds for the algorithm by Jansen and Klein~\cite{JansenK19}.

A major tool to achieve this result is the aggregation of multiple constraints and upper bounds on the variables into a single constraint without any upper bounds. To the best of our knowledge, this type of constraint aggregation has not been done before. Contrasting prior results, our method of aggregation manages to include upper bounds inside the constraint instead of copying them from the original problem.

A preliminary version of this work was published in the proceedings of \emph{European Symposium on Algorithms (ESA) 2025}~\cite{jansen_et_al:LIPIcs.ESA.2025.48}. Compared to that version, this article establishes a slightly improved result while using a simpler proof strategy and less complex constraints. Instead of having $24d+7$ different item sizes\footnote{In the conference version, omitting the upper bound constraints -- when using \Cref{lem:general aggregation} as a black box -- resulted in an incorrect count of item sizes. This issue concerns only the construction, not the correctness of the results: the correct number of distinct item sizes is $24d+7$, rather than $12d+4$.} in the \textsc{Bin Packing} instance, for given dimension $d \in \mathbb{Z}_{\geq 1}$, we now manage to construct a \textsc{Bin Packing} instance with $12d+5$ item sizes where an optimal solution requires $2^d$ distinct configurations instead of the prior $2^d-1$ configurations. Thus, we improve the lower bound from $2^{\frac{d'-7}{24}}-1$ to $2^{\frac{d'-5}{12}}$, where $d'$ is the number of different item types.

\subsection*{Overview}
Here, we give a brief overview of the structure of the document and the structure of our main proof. We begin, in \Cref{sec:aggregation}, by developing a technique with which we can aggregate an ILP containing $d$ equalities into one containing only a single equality. 
Next, in \Cref{subsec:Cons}, we construct an equality ILP with $O(d)$ constraints. Each constraint later defines an item size of the \textsc{Bin Packing} problem. We show that there are $2^{d}$ feasible solutions to this ILP. These solutions $X^*$ are the only configurations of the \textsc{Bin Packing} problem that completely fill a bin. Then, we use our aggregation result in \Cref{sec:aggregation} to transform this ILP into a \textsc{Bin Packing} instance. Finally, we show that when placing all items in the optimal $2^d$ bins, every configuration in $X^*$ has to be taken exactly once. This proves \Cref{thm:main}.

\section{Preliminaries}
For a positive integer $n$, we define $[n] \coloneqq \{1, 2, \dots, n\}$ and $[n]_0 \coloneqq \{0,1, \dots, n\}$.
A vector that contains a certain value in all entries is stylized, e.g.\ the zero vector $0_d$ is written as $\mathbb{0}_d.$
We omit the dimension $d$ if it is apparent from the context.
The support of a $d$-dimensional vector $v$ is the set of its indices with a non-zero entry. It is denoted by $\text{supp}(v) \coloneqq \{i \in [d] \mid v_i \neq 0\}.$ 

\begin{definition}[Bin Packing]
    Given is a set of $d$ different item types with sizes $s_i\in (0,C]$ and amounts $a_i\in\mathbb{Z}_{\ge0}$. Given a number $B\in \mathbb{Z}_{\ge0}$, the goal is to decide whether all items can be assigned to one of $B$ bins, while all bins are filled to at most $C.$
\end{definition}
A \emph{configuration} $x \in \mathbb{Z}_{\geq 0}^d$ is a vector of item multiplicities that fit into a single bin, i.e., $s^Tx\leq C$.

Two powerful tools that we use in this work are integer linear programming and the classical knapsack problem.
\begin{definition}[Integer Linear Programming]
    An integer linear program (ILP) is defined by a matrix $A\in \mathbb{Z}^{d\times n},$ an (optional) cost vector $c\in \mathbb{Z}^n$, an upper bound vector $u \in \mathbb{Z}_{\geq 0}^n$ and a target vector $b\in \mathbb{Z}^d$. The goal is to compute a solution vector $x\in \mathbb{Z}^n_{\geq 0}$ satisfying $Ax=b$ and $x \leq u$ while minimizing $c^Tx.$
\end{definition}

\begin{definition} [Unbounded Knapsack]
    Given $n$ items each defined by a profit $p_1, \ldots, p_n \in \mathbb{R}_{\ge 0}$ and a weight $w_1, \ldots, w_n \in \mathbb{Z}_{\ge1}$, along with a capacity $C \in \mathbb{Z}_{\ge1}$. Find a solution vector $x \in \mathbb{Z}_{\ge0}^n$ such that $\sum_{i=1}^n w_i x_i \leq C$ and $\sum_{i=1}^n p_i x_i$ is maximal. We call $\sum_{i=1}^n w_i x_i = C$ the \textit{equality knapsack constraint} to the corresponding problem.
\end{definition}
Following this definition, we call the set of all solutions fulfilling $x_i \geq 0$, $\forall i\in[n]$ and $\sum_{i=1}^n w_i x_i  \leq C$ the \emph{knapsack polytope}.

\section{Aggregation of an ILP}\label{sec:aggregation}
In this section, we show how to transform an ILP with multiple equality constraints into a single knapsack constraint. The novel aspect of our technique is that we include the external upper bounds of the variables in the aggregation.

Consider an ILP of the form $\min\{c^Tx \mid Ax=b, x \in \mathbb{Z}_{\ge0}^n, x \le u\}$ with $A = (a_{ij})_{i\in [d],j\in[n]} \in \mathbb{Z}^{d\times n}, c \in \mathbb{Z}^n, u \in \mathbb{Z}_{\ge0}^n$ and $b\in \mathbb{Z}^d$ and let $\Delta \coloneqq \|A\|_\infty$ be the largest absolute value in $A$.
For each variable $x_j, j \in [n]$, we define the constraint
\begin{align*}
    x_j + y_j = u_j,
\end{align*}
where $y_j \in \mathbb{Z}_{\geq0}$ is a slack variable. This simulates the upper bound $x_j\le u_j$.
Define $U \coloneqq \sum_{j=1}^n u_j$ as the overall upper bound on the sum of all variables. For this, add the constraint:
\begin{align} \label{eq: upper bound}
    \sum_{j=1}^n (x_j +y_j)+y_{n+1} = U,
\end{align}
with $y_{n+1} \in \mathbb{Z}_{\geq0}$ as a slack variable.
Note that bounding the sum of variables by the sum of their upper bounds does not change the set of solutions.

This results in a new ILP $A' \begin{pmatrix}
        x \\ y
    \end{pmatrix} = (b,\;u,\; U)$, where
\begin{align*}
    A' \coloneqq
    \begin{pmatrix}
        A              & 0_{dn}         & 0 \\
        I_n            & I_n            & 0 \\
        \mathds{1}_{n} & \mathds{1}_{n} & 1
    \end{pmatrix}\in\mathbb{Z}^{(d+n+1)\times (2n+1)}.
\end{align*} 

Next, we use the bound $U$ to define a large integer $M$, which we then use to aggregate the constraints:
\begin{equation}\label{eq:set M}
    \begin{array}{rl}
        M & \coloneqq \Delta \cdot U + \max(\|b\|_\infty, \|u\|_\infty) + \Delta + 2.
    \end{array}
\end{equation}

Multiplying both sides of each constraint with $1,M,M^2,\dots,M^{d+n}$ generates the following ILP where all variables have to be non-negative integers.
\begin{equation} \label{eq:ilp to aggregate}
    \begin{array}{rcl}
        \sum_{j=1}^n a_{1j} x_j                 & =      & b_1                                        \\
        M\sum_{j=1}^n a_{2j} x_j                & =      & Mb_2                                       \\
                                                & \vdots                                              \\
        M^{d-1}\sum_{j=1}^n a_{dj} x_j          & =      & M^{d-1}b_d                                 \\
        M^d (x_1 + y_1)                         & =      & M^d u_1                                    \\
                                                & \vdots                                              \\
        M^{d+n-1} (x_n +y_n)                    & =      & M^{d+n-1} u_n                              \\
        M^{d+n}(\sum_{j=1}^n (x_j+y_j)+y_{n+1}) & =      & M^{d+n} U                                  \\
        x_j                                     & \in    & \mathbb{Z}_{\ge0}\quad \forall j \in [n]   \\
        y_j                                     & \in    & \mathbb{Z}_{\ge0}\quad \forall j \in [n+1]
    \end{array}
\end{equation}

An intuitive view of this aggregation technique is that the resulting single constraint solution is a base $M$ number. Here, the smallest digit represents the solution of the first constraint, the second smallest digit represents the solution of the second constraint and so on. 
By multiplying the upper-bound constraint \eqref{eq: upper bound} with the largest number $M^{d+n}$, we ensure that it may never be broken and keeps the sum of the variables in its range.

Next, we show that this ILP can be aggregated into a single equality knapsack constraint by proving that their sets of solutions are identical.

\begin{restatable}[\Rightscissors]{lemma}{aggregation}
    \label{lem:general aggregation}
    The vector $\mathtt{sol} = (x,\;y)^T \in \mathbb{Z}_{\ge0}^{2n+1}$ is a feasible integer solution to \eqref{eq:ilp to aggregate} if and only if $\mathtt{sol}$ is an integer solution to
    \begin{equation} \label{eq: aggregated}
        \begin{array}{rl}
            \sum_{i=1}^d \big(M^{i-1} \sum_{j=1}^n (a_{ij} x_j)\big)
            &+ \sum_{j=1}^n \big(M^{d+j-1} (x_j + y_j)\big)\\
            &+ M^{d+n}\big(\sum_{j=1}^n (x_j+y_j)+y_{n+1}\big) \\
            = \sum_{i=1}^d (M^{i-1} b_i)
            &+ \sum_{j=1}^n \big(M^{d+j-1} u_j\big)
            + M^{d+n} U.
        \end{array}
    \end{equation}
\end{restatable}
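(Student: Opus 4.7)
The forward direction $(\Rightarrow)$ is immediate: summing the equations of \eqref{eq:ilp to aggregate}, each already multiplied by its respective power of $M$, reproduces \eqref{eq: aggregated}. For the converse, suppose $(x,s) \in \mathbb{Z}_{\ge 0}^{2n+1}$ satisfies \eqref{eq: aggregated}. My plan is a base-$M$ digit-peeling argument. Rewrite the aggregated equation in the form $\sum_{k=0}^{d+n} M^k\, \delta_k = 0$, where $\delta_{i-1} := (Ax)_i - b_i$ for $i \in [d]$, $\delta_{d+j-1} := (x_j + s_j) - u_j$ for $j \in [n]$, and $\delta_{d+n} := T - U$ with $T := \sum_{j=1}^n (x_j + s_j) + s_{n+1}$. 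The goal reduces to showing $\delta_k = 0$ for every $k$. Provided the uniform bound $|\delta_k| < M$ holds for each $k < d+n$, the rest is routine: reducing the identity modulo $M$ gives $\delta_0 \equiv 0 \pmod M$, hence $\delta_0 = 0$; subtracting the $\delta_0$ term and dividing by $M$ yields an identity of the same shape one level lower, and iterating peels off $\delta_1, \delta_2, \dots, \delta_{d+n-1}$ before finally forcing $\delta_{d+n} = T - U = 0$.

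The real obstacle is therefore establishing the a priori bound $T \le U$; without it the $\delta_k$ could be arbitrarily large and the modular peeling fails. I will exploit the dominance of the $M^{d+n}$-coefficient in \eqref{eq: aggregated}. Non-negativity of $x$ and $s$ yields the coarse estimates $\sum_j x_j \le T$, $x_j + s_j \le T$, and $|(Ax)_i| \le \Delta T$. Substituting these into \eqref{eq: aggregated} and collecting the $T$-terms on one side produces an inequality of the shape
\[
T\!\left(M^{d+n} - \Delta\,\frac{M^d-1}{M-1}\right) \;\le\; M^{d+n}\,U \;+\; \|u\|_\infty\,\frac{M^d(M^n-1)}{M-1} \;+\; \|b\|_\infty\,\frac{M^d-1}{M-1}.
\]
The specific choice $M = \Delta U + \max(\|b\|_\infty, \|u\|_\infty) + \Delta + 2$ is calibrated precisely so that after division this reads $T \le U + \varepsilon$ with $\varepsilon < 1$, and integrality of $T$ forces $T \le U$. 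A symmetric argument applied to $U - T$ (using $x_j + s_j \ge 0$ and $(Ax)_i \ge -\Delta T$ on the opposite side) gives $T \ge U$, so that $T = U$. This quantitative calibration is the main technical hurdle; the rest of the proof is driven entirely by this bound.

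With $T = U$ in hand, we automatically have $x_j + s_j \le U < M$ and $\sum_j x_j \le U$, hence $|\delta_{d+j-1}| \le U < M$ and $|\delta_{i-1}| \le \Delta U + \|b\|_\infty < M$ by construction of $M$. These are exactly the hypotheses required by the digit-peeling argument of the first paragraph, which then recovers every equation of \eqref{eq:ilp to aggregate} in turn and completes the proof. The hard part is thus isolated in the second paragraph; everything else reduces to bookkeeping in modular arithmetic.
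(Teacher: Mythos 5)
Your proposal is correct and follows essentially the same route as the paper's proof: both first exploit the dominance of the $M^{d+n}$-weighted constraint to establish $\sum_j(x_j+s_j)+s_{n+1}=U$ (your rearranged inequality is the paper's two-case magnitude comparison in algebraic form), and then use the resulting bound $|\delta_k|<M$ to recover the remaining constraints one power of $M$ at a time. The only cosmetic difference is that you peel digits bottom-up via reduction modulo $M$ while the paper peels top-down by showing lower-order terms cannot compensate a nonzero higher term; also, your separate derivation of $T\ge U$ is redundant, since the bottom-up peeling forces $\delta_{d+n}=0$ at the last step anyway.
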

\begin{proof}
    $\Longrightarrow:$ Let $\mathtt{sol} = (x,\;y)^T \in \mathbb{Z}_{\ge0}^{2n+1}$ be a feasible solution to \eqref{eq:ilp to aggregate}.
    For general ILPs that only contain equalities, the sum of all constraints must equal the sum of all entries in the right-hand side. Applied to our ILP $A'$, we have the following:
    If some $x \in \mathbb{Z}_{\ge0}^{2n+1}$ fulfills $A'x = b', b'\in \mathbb{Z}^{d+n+1}$, then $x$ also fulfills $\sum_{i = 0}^{d+n+1} \sum_{j=0}^{2n+1} A'_{ij} x_j = \sum_{i = 0}^{d+n+1} b'_i$.
    In our case this implies that in \eqref{eq:ilp to aggregate}, the sum of the left-hand-sides of all equations is equal to the sum of the right-hand sides because \eqref{eq:ilp to aggregate} only contains equalities. Therefore, $\mathtt{sol}$ is a feasible solution to \eqref{eq: aggregated}.

    $\Longleftarrow:$
    Let $\mathtt{sol} = (x,\;y)^T \in \mathbb{Z}_{\ge0}^{2n+1}$ be a feasible solution to \eqref{eq: aggregated}.
    Then $\mathtt{sol}$ is a feasible solution to
    \begin{equation} \label{eq: equalZero}
        \begin{split}
            \sum_{i=1}^d \big(M^{i-1} \big(\sum_{j=1}^n (a_{ij} x_j) - b_i\big)\big) + \sum_{j=1}^n \big(M^{d+j-1} (x_j + y_j - u_j)\big) &\\ + M^{d+n}\big(\sum_{j=1}^n (x_j+y_j)+y_{n+1} - U\big) &= 0.
        \end{split}
    \end{equation}
    Note that the equation \eqref{eq: equalZero} contains a single summand for each factor $M^k, k\in [d+n]_0.$ Let the \textit{term} $T_\ell$ refer to the summand containing $M^{\ell}$, i.e., we have $T_0=M^{0} \big(\sum_{j=1}^n (a_{1j} x_j) - b_1\big).$ Accordingly, the final term is $T_{d+n}=M^{d+n}\big(\sum_{j=1}^n (x_j+y_j)+y_{n+1} - U\big).$

    We show by induction that each term on the left-hand side of the equation equals 0. For this, we prove for all $i \in \{d+n,d+n-1,\ldots,0\}$ that if $T_i \neq 0$, then $|\sum_{k=1}^{i-1} T_k| < |T_i|$, i.e., no non-zero value can be compensated for by smaller terms. This then leads to a contradiction to \eqref{eq: equalZero}. Thus, we have a contradiction to the assumption that $\mathtt{sol}$ is a feasible solution to \eqref{eq: aggregated}.

    \textit{Base Case:}
    Consider the term of highest power $T_{d+n} \coloneqq M^{d+n}\big(\sum_{j=1}^n (x_j+y_j)+y_{n+1} - U\big)$ and assume $T_{d+n}\neq0$.
    Since $\Delta \ge a_{ij}, i\in[n], j\in [d]$ and $\Delta \ge 1$, we can bound the sum of terms of lower power by:
    \begin{align*}
        \sum_{i=0}^{d+n-1} T_i \le \sum_{i=0}^{d+n-1} M^{i}\big(\Delta \sum_{j=1}^{n}(x_j+y_j) + \max(\|b\|_\infty, \|u\|_\infty)\big).
    \end{align*}
    To show that the sum of those terms $\sum_{i=0}^{d+n-1} T_i $ can never reach an absolute value within the range of $T_{d+n}$, we consider the following two cases:

    \textit{Case 1:} Assume $U \ge \sum_{j=1}^n (x_j+y_j)$.
    With the geometric sum $\sum_{i=0}^{d+n-1} M^{i} = \frac{M^{d+n}-1}{M-1}$, we get
    \begin{align*}
        |\sum_{i=0}^{d+n-1} T_i| & \le |\sum_{i=0}^{d+n-1} M^{i}\big(\Delta \cdot U + \max(\|b\|_\infty, \|u\|_\infty)\big)| \\
                               & \stackrel{\eqref{eq:set M}}{\le} |\sum_{i=0}^{d+n-1} M^{i}(M-1)|                    \\
                               & = \frac{M^{d+n}-1}{M-1}(M-1)                                                        \\
                               & = M^{d+n}-1                                                                         \\
                               & < M^{d+n}.
    \end{align*}
    However, since the variables and coefficients are required to be integer and $M \neq0$, the assumption $T_{d+n}\neq0$ implies that
    \begin{align*}
        \sum_{j=1}^n (x_j+y_j)+y_{n+1} - U \le -1 \qquad \text{or} \qquad \sum_{j=1}^n (x_j+y_j)+y_{n+1} - U \ge 1
    \end{align*}
    which then implies
    \begin{align*}
        T_{d+n} \le -M^{d+n} \qquad \text{or} \qquad T_{d+n} \ge M^{d+n}.
    \end{align*}
    Hence we have that the sum of all terms $T_0$ to $T_{d+n-1}$ can never equalize the term $T_{d+n}$ if $T_{d+n} \neq 0$. Therefore, we have a contradiction to $\eqref{eq: equalZero}$ and thus $T_{d+n} = 0$.

    \textit{Case 2:} Assume $U < \sum_{j=1}^n (x_j+y_j)$ and let $z:= \sum_{j=1}^n (x_j+y_j) - U$. Note that $z$ is a positive integer. With $y_{n+1}\ge0$, the assumption $T_{d+n}\neq0$ implies that
    \begin{align*}
        T_{d+n} \ge M^{d+n}z.
    \end{align*}
    So now we show that the sum of the terms of lower power $\sum_{i=0}^{d+n-1} T_i$ can never reach an absolute value of at least $M^{d+n}z$.
    With \eqref{eq:set M}, we have
    \begin{equation}\label{eq: base case, case 2}
        \begin{array}{crl}
                                & \Delta \cdot U + \max(\|b\|_\infty, \|u\|_\infty) + \Delta + 1           & < M       \\
            \Longrightarrow     & \frac{\Delta \cdot U + \max(\|b\|_\infty, \|u\|_\infty)}{z} + \Delta + 1 & < M       \\
            \Longleftrightarrow & \Delta \cdot U + \max(\|b\|_\infty, \|u\|_\infty) + \Delta z + z         & < Mz      \\
            \Longleftrightarrow & \Delta \cdot U + \max(\|b\|_\infty, \|u\|_\infty) + \Delta z             & < Mz - z. \\
        \end{array}
    \end{equation}
    Using $z +U = \sum_{j=1}^n (x_j+y_j)$, this implies
    \begin{align*}
        |\sum_{i=0}^{d+n-1} T_i| & \le \sum_{i=0}^{d+n-1} M^{i}\big(\Delta (U+z) + \max(\|b\|_\infty, \|u\|_\infty)\big)     \\
                               & = \sum_{i=0}^{d+n-1} M^{i}\big(\Delta \cdot U+ \Delta \cdot z + \max(\|b\|_\infty, \|u\|_\infty)\big) \\
                               & \stackrel{\eqref{eq: base case, case 2}}{<} \sum_{i=0}^{d+n-1} M^{i}(Mz - z)              \\
                               & = \frac{M^{d+n}-1}{M-1}(M - 1)z                                                           \\
                               & = M^{d+n}z-z                                                                              \\
                               & < M^{d+n}z.
    \end{align*}
    From this, we have that the sum of all terms $\sum_{i=0}^{d+n-1} T_i$ can never equalize the term $T_{d+n}$ if $T_{d+n} \neq 0$. Thus, \eqref{eq: equalZero} implies $T_{d+n} = 0$. Since $M^{d+n} \neq 0$, we have $\sum_{j=1}^n (x_j+y_j)+y_{n+1} - U$ and therefore
    \begin{align}\label{eq: sum bound}
        \sum_{j=1}^n (x_j+y_j)+y_{n+1} = U.
    \end{align}

    \textit{Inductive Step:} Let $k \in \{d+n,\dots,1\}$ and assume $T_{k'} = 0$ for all $k' \ge k$. Now, consider term $T_{k-1}$. Either it is of the form $M^{k-1} \big( \sum_{j=1}^n a_{ij}x_j-b_i\big)$ with $i=k$ or it is of the form $M^{k-1} (x_j+y_j-u_j)$ with $j=k-d$. In both cases we have that if $T_{k-1} \neq 0$, then
    \begin{align*}
        T_{k-1} \le -M^{k-1} \qquad \text{or} \qquad T_{k-1} \ge M^{k-1}.
    \end{align*}
    Now, we bound the sum of terms of lower power by
    \begin{equation*}
        \begin{array}{rl}
            |\sum_{i=0}^{k-2} T_i| & \stackrel{\eqref{eq: sum bound}}{\le} |\sum_{i=0}^{k-2} M^{i}\big(\Delta \cdot U + \max(\|b\|_\infty, \|u\|_\infty)\big)| \\
                                   & \stackrel{\eqref{eq:set M}}{\le} |\sum_{i=0}^{k-2} M^{i} (M-1)|                                                     \\
                                   & =\frac{M^{k-1}-1}{M-1}(M-1)                                                                                           \\
                                   & = M^{k-1}-1                                                                                                           \\
                                   & < M^{k-1}.
        \end{array}
    \end{equation*}
    This directly implies that $T_{k-1} = 0$.

    Therefore, we have that $T_i = 0$ for all $i \in [d+n]_0$ which implies that $\mathtt{sol}$ is a feasible solution to \eqref{eq:ilp to aggregate}.
\end{proof}

\section{Bounding the Support for Bin Packing}
Assume we are given a knapsack polytope $P$ of dimension $d+1$. Also assume that there exists a \textsc{Bin Packing} instance $\mathcal{I}$ with just one unique optimal integer solution $x \in \mathbb{Z}_{\ge0}^{d'}$, $d'=O(d)$, where the solution vector $x$ denotes which configurations are chosen in the packing. Additionally, let the $d+1$-dimensional integer points in $P$ be equal to the first $d+1$ dimensions of the configurations. If we now require $k$ different points in $P$ to represent a target or multiplicity vector $a$ as an integer linear combination then $x$ has at least $k$ non-zero entries, which means that we need at least $k$ different configurations in the \textsc{Bin Packing} solution. We construct such an instance where $k$ is exponential in $d'$ and thus show that there exists an exponential lower bound on the support of the \textsc{Bin Packing} problem, i.e., we prove \Cref{thm:main}.

\subsection{Construction of the ILP}\label{subsec:Cons}
In this section, we aim to construct an ILP with equality constraints and upper bounds on the variables, where the first $d+1$ coordinates in its solution vectors form the set
\begin{equation}\label{eq:knapsack points}
    \{(x_1^{\texttt{bin}}, x_2^{\texttt{bin}}, \ldots, x_{d}^{\texttt{bin}}, \gamma^{\sum_{\ell=1}^{d}{2^{d-\ell} \cdot x_{\ell}^{\texttt{bin}}}})\mid x^{\texttt{bin}}_\ell \in \{0,1\} \; \forall \ell \in [d]\}\}.
\end{equation}

Note that we later set $\gamma \coloneqq 4 \cdot 2^d$, then the set of solution vectors will correspond to a certain set of configurations of the \textsc{Bin Packing} instance.
The vectors behave as follows: The last coordinate is of the form $\gamma^i$ for some $i \in \mathbb{Z}_{\ge1}$ and the first $d$ coordinates are the binary encoding of $i.$ In total, there are exactly $2^d$ vectors in the set.

In this section, we construct a set of (in-)equality constraints where, over all feasible solutions, a certain set of variables matches the vectors in \eqref{eq:knapsack points}. In particular, the binary variables $x^\texttt{bin}_1, \dots, x^\texttt{bin}_d$ are going to represent the binary encoding and the variable $r_0$ will be the $(d+1)$-st coordinate, i.e., the set of constraints is only feasible if $r_0 = \gamma^i$ for some $i \in [2^d-1]_0$ and $i = \sum_{\ell=1}^d 2^{d-\ell}\cdot x^\texttt{bin}_\ell$.

After that, in the following sections, we turn the inequalities into equations by introducing slack variables. Then we use the aggregation presented in \Cref{sec:aggregation} to transform these constraints into a single knapsack constraint. We provide a proof that the knapsack constraint is indeed equivalent to the proposed set of constraints (i.e., both allow the same set of solutions) and requires $O(d)$ many variables. With this equivalence, we then have that the first $d+1$ dimensions of the configurations in the \textsc{configIP} form the set \eqref{eq:knapsack points}. Finally, we prove that there exists an exponential lower bound on the support of the \textsc{Bin Packing} problem.

For now, assume that we are given a dimension $d\in\mathbb{Z}_{\ge1}$ and a base $\gamma\in\mathbb{Z}_{\ge2}$. We introduce a set of constraints with $O(d)$ variables which has exactly $2^d$ integer solutions where the variables $x^\texttt{bin}_1,\dots,x^\texttt{bin}_d,$ and $r_0$ behave as desired.

For a brief overview, our constraints behave as follows:
Assume, we are given a binary number $x^\texttt{bin}$ with $d$ digits, where $x^\texttt{bin}_1$ is the most-significant and $x^\texttt{bin}_d$ is the least-significant bit. Leading zeros are allowed.
The variables $r_1,\dots,r_d$ take intermediate values of $r_0$, in detail $r_\ell = \gamma^{\sum_{k=\ell+1}^{d}{2^{d-k} \cdot x_{k}^{\texttt{bin}}}}$ for all $\ell \in [d]_0$.
Note that we can directly fix $r_d = 1$.
In order to correctly set $r_0,\dots,r_{d-1}$, we introduce the auxiliary variables $z_1,\dots,z_d$ which fulfill the property $z_\ell = r_\ell \cdot x_\ell^{\texttt{bin}}.$

The following $4d+1$ constraints in combination with the bounds on the variables, stated below, force the desired set of solutions. Note that all variables should take non-negative values. This is enforced by the ILP condition. Thus, we only state the upper bounds below.

\begin{restatable}{constraints}{cons1}\label{con: original}
    \begin{align}
        r_d                                      & = 1 \label{eq:cons1_1}                                                                                                       \\
        z_\ell                                   & \ge - \gamma^{2^{d-\ell}-1} + \gamma^{2^{d-\ell}-1} x^{\texttt{bin}}_\ell + r_\ell & \forall \ell \in [d] \label{eq:cons1_2} \\
        z_\ell                                   & \le \gamma^{2^{d-\ell}-1} x^{\texttt{bin}}_\ell                                    & \forall \ell \in [d] \label{eq:cons1_3} \\
        z_\ell                                   & \le r_\ell                                                                         & \forall \ell \in [d] \label{eq:cons1_4} \\
        r_{\ell-1} &= (\gamma^{2^{d-\ell}} -1) z_\ell + r_\ell                                                                      & \forall \ell \in [d] \label{eq:cons1_5}
    \end{align}
    Upper Bounds:
    \begin{equation}\label{eq:bounds}
    \begin{array}{rll}
        x^\texttt{bin}_\ell & \leq 1                     & \forall \ell \in [d]   \\
        r_\ell              & \leq \gamma^{2^{d-\ell}-1} & \forall \ell \in [d]_0 \\
        z_\ell              & \leq \gamma^{2^{d-\ell}-1} & \forall \ell \in [d]
    \end{array}
\end{equation}
\end{restatable}

For easier understanding, we present an example.
\begin{example}\label{ex:example}
    Let $\gamma \in \mathbb{Z}_{\ge2}$ and set $d \coloneqq 3$. Then, for $i = 0,\dots,7$, we get the following values for each variable. The highlighted columns correspond to the values in the set \eqref{eq:knapsack points}.\\

    \newcolumntype{C}{>{\centering\arraybackslash}p{0.6cm}}

    \begin{tabular}{|C ||>{\columncolor{subbi!30}} C |>{\columncolor{subbi!30}} C |>{\columncolor{subbi!30}} C ||>{\columncolor{subbi!30}} C | C | C | C || C | C | C |}
        \hline
        $i$ & $x^{\texttt{bin}}_1$ & $x^{\texttt{bin}}_2$ & $x^{\texttt{bin}}_3$ & $r_0$      & $r_1$      & $r_2$      & $r_3$ & $z_1$      & $z_2$      & $z_3$ \\
        \hline
        \hline
        0   & 0                    & 0                    & 0                    & 1 & 1 & 1 & 1     & 0          & 0          & 0     \\
        \hline
        1   & 0                    & 0                    & 1                    & $\gamma^1$ & $\gamma^1$ & $\gamma^1$ & 1     & 0          & 0          & 1     \\
        \hline
        2   & 0                    & 1                    & 0                    & $\gamma^2$ & $\gamma^2$ & 1          & 1     & 0          & 1          & 0     \\
        \hline
        3   & 0                    & 1                    & 1                    & $\gamma^3$ & $\gamma^3$ & $\gamma^1$ & 1     & 0          & $\gamma^1$ & 1     \\
        \hline
        4   & 1                    & 0                    & 0                    & $\gamma^4$ & 1          & 1          & 1     & 1          & 0          & 0     \\
        \hline
        5   & 1                    & 0                    & 1                    & $\gamma^5$ & $\gamma^1$ & $\gamma^1$ & 1     & $\gamma^1$ & 0          & 1     \\
        \hline
        6   & 1                    & 1                    & 0                    & $\gamma^6$ & $\gamma^2$ & 1          & 1     & $\gamma^2$ & 1          & 0     \\
        \hline
        7   & 1                    & 1                    & 1                    & $\gamma^7$ & $\gamma^3$ & $\gamma^1$ & 1     & $\gamma^3$ & $\gamma^1$ & 1     \\
        \hline
    \end{tabular}
\end{example}

The following observations state properties on the feasible assignments of the variables which then lead to the correctness of the constraints.

\begin{observation}\label{obs:properties}
    Let $\gamma\in \mathbb{Z}_{\ge2}$ and $d \in \mathbb{Z}_{\ge1}$.
    Every feasible solution of \Cref{con: original} satisfies:
    \begin{enumerate}
        \item $x^\texttt{bin}_\ell \in \{0,1\}$ for all $\ell \in [d]$,
        \item $z_\ell \in \{0,r_\ell\}$ and $z_\ell = r_\ell x^\texttt{bin}_\ell$ for all $\ell \in [d]$, and
        \item $r_\ell = \gamma^{\sum_{k=\ell+1}^d 2^{d-k}x^\texttt{bin}_k}$ for all $\ell \in [d]_0$.
    \end{enumerate}
\end{observation}
\begin{proof} We prove each statement separately.
    \begin{enumerate}
        \item $x^\texttt{bin}_\ell \in \{0,1\}$ for all $\ell \in [d]$ holds, since we restrict the variables to only take non-negative integer values and we have the upper bound $x^\texttt{bin}_\ell \leq 1$.
        \item Let $\ell \in [d]$. Since $x^\texttt{bin}_\ell\in\{0,1\}$, we may distinguish the following two cases.

              If $x^\texttt{bin}_\ell = 0$, then \eqref{eq:cons1_3} yields $z_\ell \leq 0$. Since all variables are required to be non-negative, we obtain $z_\ell = 0$. Note that the other constraints that restrict $z_\ell$, which are \eqref{eq:cons1_2} and \eqref{eq:cons1_4}, are also fulfilled.

              If $x^\texttt{bin}_\ell = 1$, then \eqref{eq:cons1_2} implies $z_\ell \geq r_\ell$, while \eqref{eq:cons1_4} implies $z_\ell \leq r_\ell$. Hence $z_\ell = r_\ell$. Again note that \eqref{eq:cons1_2} is fulfilled since $r_\ell$ is upper bounded by $\gamma^{2^{d-\ell}-1}$.

              Therefore, for every $\ell \in [d]$, we have $z_\ell \in \{0,r_\ell\}$ and $z_\ell = r_\ell x^\texttt{bin}_\ell$.
        \item We prove this statement by induction on $\ell = d, d-1,\dots,1,0$.

              \emph{Base Case:} Assume $\ell = d$. Then by \eqref{eq:cons1_1}, we have $r_d = 1 = \gamma^0 = \gamma^{\sum_{k=d+1}^d 2^{d-k}x^\texttt{bin}_k}$.

              \emph{Inductive Step:} Let $\ell \in [d]$ and assume $r_\ell = \gamma^{\sum_{k=\ell+1}^d 2^{d-k}x^\texttt{bin}_k}$. We now prove the statement for $\ell-1$. Plugging in $z_\ell = r_\ell x^\texttt{bin}_\ell$ from statement 2 into \eqref{eq:cons1_5} gives $r_{\ell-1} = (\gamma^{2^{d-\ell}} -1) r_\ell x^\texttt{bin}_\ell + r_\ell$.

              If $x^\texttt{bin}_\ell = 0$, then $r_{\ell-1} = r_\ell$.

              If $x^\texttt{bin}_\ell = 1$, then $r_{\ell-1} = \gamma^{2^{d-\ell}}r_\ell$.

              In both cases, we have $r_{\ell-1} = \gamma^{2^{d-\ell} x^\texttt{bin}_\ell}r_\ell$. The induction hypothesis now gives
              \begin{align*}
                  r_{\ell-1} & = \gamma^{2^{d-\ell} x^\texttt{bin}_\ell} \cdot \gamma^{\sum_{k=\ell+1}^d 2^{d-k}x^\texttt{bin}_k} \\
                             & = \gamma^{2^{d-\ell} x^\texttt{bin}_\ell + \sum_{k=\ell+1}^d 2^{d-k}x^\texttt{bin}_k}              \\
                             & = \gamma^{\sum_{k=\ell}^d 2^{d-k}x^\texttt{bin}_k}
              \end{align*}
              which is the desired property.
    \end{enumerate}
\end{proof}

\begin{observation}\label{obs:solutions}
    For all $i \in [2^d-1]_0$ there exists a feasible solution to \Cref{con: original} with $r_0 = \gamma^i$.
\end{observation}
\begin{proof}
    Let $i \in [2^d-1]_0$ and set $x^\texttt{bin}$ such that $i = \sum_{\ell=1}^{d} 2^{d-\ell} x^{\texttt{bin}}_\ell$. Since $i < 2^d$, the number can be encoded in $d$ bits.
    Define the variable $r_d \coloneqq 1$ and recursively set $r_{\ell-1} \coloneqq \gamma^{2^{d-\ell} x^\texttt{bin}_\ell}r_\ell$ for all $\ell \in [d-1]_0$.
    Also, define $z_\ell \coloneqq r_\ell x^\texttt{bin}_\ell$ for all $\ell \in [d]$.

    Note that all variables are non-negative and satisfy the bounds \eqref{eq:bounds}. We now prove that all constraints are also fulfilled.
    \begin{itemize}
        \item[\eqref{eq:cons1_1}] $r_d = 1$ holds by construction.
        \item[\eqref{eq:cons1_2}-\eqref{eq:cons1_4}] Let $\ell\in [d]$. We distinguish the following two cases:\\
              \emph{Case 1:} Assume $x^\texttt{bin}_\ell = 0$. Then $z_\ell = 0$. Thus, $z_\ell = 0 \ge -\gamma^{2^{d-\ell}-1}+r_\ell$, i.e., constraint \eqref{eq:cons1_2}, holds since $r_\ell \leq \gamma^{2^{d-\ell}-1}$.
              Also, $z_\ell = 0 \leq 0 = \gamma^{2^{d-\ell}-1}x^\texttt{bin}_\ell$, i.e., constraint \eqref{eq:cons1_3}, and $z_\ell = 0 \leq r_\ell$, i.e. constraint \eqref{eq:cons1_4}, hold.\\
              \emph{Case 2:} Assume $x^\texttt{bin}_\ell = 1$. Then $z_\ell = r_\ell$. Thus, $z_\ell = r_\ell \ge -\gamma^{2^{d-\ell}-1}+\gamma^{2^{d-\ell}-1}+r_\ell = r_\ell$, i.e., constraint \eqref{eq:cons1_2}, and $z_\ell = r_\ell \leq r_\ell$, i.e. constraint \eqref{eq:cons1_4}, hold.
              Also, $z_\ell = r_\ell \leq \gamma^{2^{d-\ell}-1}$, i.e., constraint \eqref{eq:cons1_3}, holds by the upper bound on $r_\ell$.
        \item[\eqref{eq:cons1_5}] Again let $\ell\in [d]$ and we distinguish the two cases:\\
              \emph{Case 1:} If $x^\texttt{bin}_\ell = 0$, then $z_\ell = 0$ and $(\gamma^{2^{d-\ell}}-1)z_\ell + r_\ell = r_\ell = r_{\ell-1}$ holds.\\
              \emph{Case 2:} If $x^\texttt{bin}_\ell = 1$, then $z_\ell = r_\ell$ and $(\gamma^{2^{d-\ell}}-1)z_\ell + r_\ell = \gamma^{2^{d-\ell}}r_\ell = r_{\ell-1}$ holds.
    \end{itemize}
\end{proof}

\begin{lemma}\label{lem:correctness}
    The set of solutions of \Cref{con: original} consists of exactly $2^d$ distinct vectors where $r_0$ is of the form $\gamma^i$ with $i\in [2^d-1]_0$ and where $x^\texttt{bin}$ is the binary encoding of $i$ (allowing leading zeros), i.e., $i = \sum_{\ell=1}^{d} 2^{d-\ell} x^{\texttt{bin}}_\ell$.
\end{lemma}
\begin{proof}
    \Cref{obs:properties} implies that each feasible solution satisfies $r_0 = \gamma^{\sum_{\ell=1}^d 2^{d-\ell}x^\texttt{bin}_\ell}$ and $x^\texttt{bin}_\ell \in \{0,1\}$ for all $\ell \in [d]$. Thus, $r_0$ takes a value where the exponent of $\gamma$ is integer and takes one of the values in $[2^d-1]_0$. Larger values are not possible as they cannot be encoded by $d$ bits.
    By \Cref{obs:solutions}, we know that all exponents in $[2^d-1]_0$ are feasible.
\end{proof}

Thus, we have shown that the \Cref{con: original} are feasible if and only if variable $r_0$ takes a value of the form $\gamma^i$ for given $\gamma \in \mathbb{Z}_{\ge2}$ and positive integer $i$. Additionally, $x^{\texttt{bin}}$ matches the binary encoding of $i$, i.e., it holds that $i = \sum_{\ell=1}^{d}2^{d-\ell} x^{\texttt{bin}}_\ell$.

\subsection{Aggregation of the Constraints}
Before we can aggregate the constraints into a single knapsack constraint, we need to transform the inequalities into equations such that we achieve an ILP of the form considered in \Cref{sec:aggregation}. We can do that by introducing slack variables which results in the following new set of constraints. The upper bounds on the variables do not change while the bounds on the introduced slack variables can be determined by the coefficients, the variables and the right-hand side of their constraint. 
\begin{constraints}\label{con: equations}
    \begin{align*}
        r_d                                                                      & = 1                                                               \\
        \gamma^{2^{d-\ell}-1} x^{\texttt{bin}}_\ell + r_\ell-z_\ell + y_{1,\ell} & = \gamma^{2^{d-\ell}-1} & \forall \ell \in [d]                    \\
        z_\ell - \gamma^{2^{d-\ell}-1} x^{\texttt{bin}}_\ell + y_{2,\ell}        & = 0                     & \forall \ell \in [d]                    \\
        z_\ell - r_\ell + y_{3,\ell}                                             & = 0                     & \forall \ell \in [d]                    \\
        (\gamma^{2^{d-\ell}} -1) z_\ell + r_\ell - r_{\ell-1}                    & = 0                     & \forall \ell \in [d] \label{eq:cons1_5} \\
        r_0 - y_0                                                                & = 2
    \end{align*}
Upper Bounds:     
    \begin{equation*}
    \begin{array}{rll}
        x^\texttt{bin}_\ell & \leq 1                      & \forall \ell \in [d]   \\
        r_\ell              & \leq \gamma^{2^{d-\ell}-1}  & \forall \ell \in [d]_0 \\
        z_\ell              & \leq \gamma^{2^{d-\ell}-1}  & \forall \ell \in [d]   \\
        y_{1,\ell}          & \leq 2\gamma^{2^{d-\ell}-1} & \forall \ell \in [d]   \\
        y_{2,\ell}          & \leq \gamma^{2^{d-\ell}-1}  & \forall \ell \in [d]   \\
        y_{3,\ell}          & \leq \gamma^{2^{d-\ell}-1}  & \forall \ell \in [d]   \\
        y_0                 & \leq \gamma^{2^d-1}
    \end{array}
    \end{equation*}
\end{constraints}

Note that \Cref{con: equations} consists of $4d+2 $ equality constraints and that the largest absolute value of the coefficients is $\gamma^{2^{d-1}}-1 =: \Delta$.
Now, as done in \Cref{sec:aggregation}, we add the upper bound constraint for each of the $6d+1$ variables. For example, for $x_1^\texttt{bin}$, we get $x_1^\texttt{bin} + y_{x_1^\texttt{bin}} = 1$, where $y_{x_1^\texttt{bin}} \in \mathbb{Z}_{\geq 0}$ is a new variable. 
Additionally, we define $U$ to be the sum of the upper bounds of all variables (including slacks) and introduce the constraint which restricts the sum of all variables by $U$. We denote the sum of all variables by $v$.
Thus, our additional constraint is $v + y_u = U$ with slack variable $y_u \in \mathbb{Z}_{\geq 0}$.
In total, we obtain an ILP with $10d+5$ constraints and $12d+5$ variables.

Now, setting
\begin{align*}
    M & \coloneqq \Delta \cdot U + \max(\|b\|_\infty, \|u\|_\infty) + \Delta + 2 \\
      & = \gamma^{2^{d-1}-1} U + \gamma^{2^d-1} + \gamma^{2^{d-1}-1} + 2
\end{align*}
and multiplying the ILP with the vector $(1,M,M^2,\dots,M^{10d+4})$ results in an ILP of the form \eqref{eq:ilp to aggregate} which is equivalent to the \Cref{con: equations}.

By Lemma 1, this can be aggregated to a single equality constraint. To transform the resulting constraint into a \textsc{Bin Packing} instance, we interpret the variables as items. By grouping the coefficients such that each variable only occurs once, the new coefficient defines the size of the corresponding item.
To complete the transformation, we turn the equation into an unbounded knapsack constraint: We set the lower bound of each variable to 0 and relax the equality into a $\le$ inequality, i.e., the sum of item sizes and multiplicities is at most the right-hand side. While this relaxation expands the set of feasible solutions (configurations), we show in the following section that an optimal \textsc{Bin Packing} solution cannot take any of these new configurations.

\subsection{Main proof}
With this, we are ready to prove our main theorem.

\mainthm*
\begin{proof}
    Let $d\in \mathbb{Z}_{\ge0}$. The \textsc{Bin Packing} instance has dimension $d'$, i.e., the number of different item sizes is $d'=12d+5.$ Denote the capacity of a single bin as $C$ and set the size vector $s$ according to the coefficients in the knapsack constraint. 
    Set $C$ to be the right-hand side of the knapsack constraint. Define the knapsack polytope $P=\{x\in\mathbb{R}^{d'}\mid x_i, s^Tx \leq C\}.$
    Let $X$ be the set $P\cap \mathbb{Z}^{d'}$, i.e., the solutions with total size less than or equal to $C.$ In particular, $X$ is the set of all feasible configurations for bins with capacity $C$. Define $X^*=\{x\in X \mid s^Tx=C\}$ to be the set of all solutions with size exactly $C$, i.e., the set of configurations that completely fill a bin. See \Cref{fig:polytope} for an illustration of $X$ and $X^*$. 

\begin{figure}[t]
    \centering

    \begin{tikzpicture}
        \draw[->] (0,0) -- (0,3);
        \draw[->] (0,0) -- (6,0);
        \node at (-0.2,2.7) {\small $s_1$};
        \node at (5.7,-0.2) {\small $s_2$};
        \draw[opacity=0.3, fill=subbi] (0,0) -- (0,2.6) -- (5.4,0) -- cycle;
        \draw[color=red] (0,2.6)  -- node [midway, above=6pt] {$X^*$}( 5.4,0);
        \foreach \n in {        0.2,0.4,0.6,0.8,1,1.2,1.4,1.6,1.8,2,2.2,2.4,2.6,2.8,3,3.2,3.4,3.6,3.8,4,4.2,4.4,4.6,4.8,5
            }
            {
                \node at (\n,0.2)[circle,fill,inner sep=.7pt,opacity =0.2]{};
            }
        \foreach \n in {        0.2,0.4,0.6,0.8,1,1.2,1.4,1.6,1.8,2,2.2,2.4,2.6,2.8,3,3.2,3.4,3.6,3.8,4,4.2,4.4,4.6
            }
            {
                \node at (\n,0.4)[circle,fill,inner sep=.7pt,opacity =0.2]{};
            }
        \foreach \n in {        0.2,0.4,0.6,0.8,1,1.2,1.4,1.6,1.8,2,2.2,2.4,2.6,2.8,3,3.2,3.4,3.6,3.8,4,4.2
            }
            {
                \node at (\n,0.6)[circle,fill,inner sep=.7pt,opacity =0.2]{};
            }
        \foreach \n in {        0.2,0.4,0.6,0.8,1,1.2,1.4,1.6,1.8,2,2.2,2.4,2.6,2.8,3,3.2,3.4,3.6,3.8
            }
            {
                \node at (\n,0.8)[circle,fill,inner sep=.7pt,opacity =0.2]{};
            }
        \foreach \n in {        0.2,0.4,0.6,0.8,1,1.2,1.4,1.6,1.8,2,2.2,2.4,2.6,2.8,3,3.2,3.4
            }
            {
                \node at (\n,1)[circle,fill,inner sep=.7pt,opacity =0.2]{};
            }
        \foreach \n in {        0.2,0.4,0.6,0.8,1,1.2,1.4,1.6,1.8,2,2.2,2.4,2.6,2.8,3
            }
            {
                \node at (\n,1.2)[circle,fill,inner sep=.7pt,opacity =0.2]{};
            }
        \foreach \n in {        0.2,0.4,0.6,0.8,1,1.2,1.4,1.6,1.8,2,2.2,2.4,2.6
            }
            {
                \node at (\n,1.4)[circle,fill,inner sep=.7pt,opacity =0.2]{};
            }
        \foreach \n in {        0.2,0.4,0.6,0.8,1,1.2,1.4,1.6,1.8,2,2.2
            }
            {
                \node at (\n,1.6)[circle,fill,inner sep=.7pt,opacity =0.2]{};
            }
        \foreach \n in {        0.2,0.4,0.6,0.8,1,1.2,1.4,1.6
            }
            {
                \node at (\n,1.8)[circle,fill,inner sep=.7pt,opacity =0.2]{};
            }
        \foreach \n in {        0.2,0.4,0.6,0.8,1,1.2,1.4
            }
            {
                \node at (\n,2)[circle,fill,inner sep=.7pt,opacity =0.2]{};
            }
        \foreach \n in {        0.2,0.4,0.6,0.8,1
            }
            {
                \node at (\n,2.2)[circle,fill,inner sep=.7pt,opacity =0.2]{};
            }
        \foreach \n in {        0.2,0.4,0.6
            }
            {
                \node at (\n,2.4)[circle,fill,inner sep=.7pt,opacity =0.2]{};
            }
        \foreach \n in {    0.2
            }
            {
                \node at (\n,2.6)[circle,fill,inner sep=.7pt,opacity =0.2]{};
            }
        \node at (1,1){$X$};

    \end{tikzpicture}

    \caption{A 2-dimensional example of the knapsack constraint with item types $s_1,s_2.$ The shaded area denotes the knapsack polytope $P$. The points resemble integer points and show all feasible configurations. Integer points in the shaded region are the configurations in the set $X$. Integer points on the red line form the set $X^*$ of configurations. Only configurations in $X^*$ fill the knapsack constraint with equality. }
    \label{fig:polytope}
\end{figure}
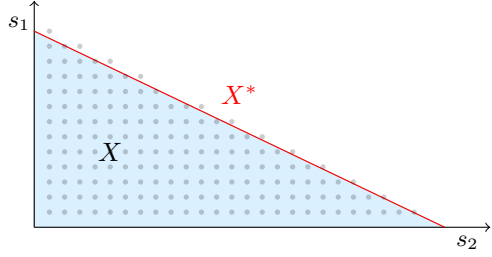

    By the definition of the knapsack constraint and \Cref{lem:correctness}, we know that $\vert X^*\vert =2^{d}=:k,$ as \Cref{con: equations} have exactly one solution for every possible combination of $x^{\texttt{bin}}_\ell, \ell \in [d']$ with $x^{\texttt{bin}}_\ell \in \{0,1\}.$ 
    Recall that the size vector $s$ is given by the coefficients of the aggregated knapsack constraint, while each variable represents the multiplicity how many items of such size are packed in the corresponding configuration. The total amount of items $a$ is now defined as the componentwise sum of the multiplicities of all configurations in $X^*$, i.e., $a \coloneqq \sum_{x \in X^*}x$. Thus, the total size of all items to be placed is $a^T s = kC$.

    Clearly, the optimal solution to this \textsc{Bin Packing} instance uses at least $k$ bins. One possible solution is to take each configuration in $X^*$ exactly once. In the following we show that this is the only optimal solution.
    We cannot consider configurations in $X\setminus X^*$ as they have size strictly less than $C.$ As the total size of the right-hand side is exactly $kC$ and the bins have capacity $C$, taking one of these configurations would imply that we would have to use at least $k+1$ bins which is not optimal.
    Now, set $\gamma \coloneqq 4k$ and consider the coordinate of $r_0$. By the definition of $a$, we have 
    \begin{align}\label{eq:y_r0}
        a_{r_0} = \sum_{x\in X^*} x_{r_0} = \sum_{i \in [k-1]_0} (4k)^i.
    \end{align}

    We now rule out the existence of any solution other than the trivial one. For the sake of contradiction, assume $a_{r_0} = \sum_{i \in [k-1]_0} \lambda_i (4k)^i$ admits a feasible solution $\lambda \in \mathbb{Z}_{\geq 0}^k$ such that $\lambda_i \neq 1$ for some $i \in [k-1]_0$. Let $i'$ be the largest index where $\lambda_{i'} \neq 1$, i.e., $i' \coloneqq \max\{i \mid \lambda_i \neq 1\}$. With \eqref{eq:y_r0}, we have
    \begin{equation*}
        \begin{array}{crl}
                                & \sum_{i \in [k-1]_0} (4k)^i               & = \sum_{i \in [k-1]_0} \lambda_i (4k)^i \\
            \Longleftrightarrow & \sum_{i \in [k-1]_0} (1-\lambda_i) (4k)^i & = 0.
        \end{array}
    \end{equation*}
    Since $\lambda_i = 1$ for all $i > i'$, the equation reduces to
    \begin{align*}
        (1-\lambda_{i'}) (4k)^{i'} + \sum_{i \in [i'-1]_0} (1-\lambda_i) (4k)^i = 0.
    \end{align*}
    Since $\sum_{i \in [k-1]_0} \lambda_i = k$, we have $0 \leq \lambda_i \leq k < 4k$ for all $i \in [k-1]_0$. Thus, no carries are possible and with $k \geq 1$, we get
    \begin{align*}
        \left| \sum_{i \in [i'-1]_0} (1-\lambda_i) (4k)^i \right| 
        \leq \sum_{i \in [i'-1]_0} k (4k)^i 
        = k \cdot \frac{(4k)^{i'}-1}{4k-1} 
        < \frac{k}{4k-1} \cdot (4k)^{i'} 
        < (4k)^{i'}.
    \end{align*}
    However, since $\lambda_{i'}\neq 1$, we also have $|(1-\lambda_{i'})| (4k)^{i'} \geq (4k)^{i'}$. A contradiction.

    Therefore, the only optimal solution $x$ to the \textsc{Bin Packing} instance uses every configuration in $X^*$ exactly once, and therefore has $|\text{supp}(x)|=k=2^{d}.$ Since $d'=12d+5$, we get 
    \begin{align*}
        |\text{supp}(x)|\in 2^{\Omega(d')}.
    \end{align*}
\end{proof}

\clearpage
\bibliography{bib2doi}

@article{GoemansR20,
  author = {Michel X. Goemans and Thomas Rothvoss},
  title = {Polynomiality for Bin Packing with a Constant Number of Item Types},
  journal = {J. {ACM}},
  volume = {67},
  number = {6},
  pages = {38:1--38:21},
  year = {2020},
  timestamp = {Sun, 19 Jan 2025 00:00:00 +0100},
  biburl = {https://dblp.org/rec/journals/jacm/GoemansR20.bib},
  bibsource = {dblp computer science bibliography, https://dblp.org},
  doi = {10.1145/3421750},
  _bib2doi_selected = {dblp:/rec/journals/jacm/GoemansR20.bib},
  _bib2doi_confirmed = {true},
}

@article{EisenbrandS06,
  author = {Friedrich Eisenbrand and Gennady Shmonin},
  title = {Carath{\'{e}}odory bounds for integer cones},
  journal = {Oper. Res. Lett.},
  volume = {34},
  number = {5},
  pages = {564--568},
  year = {2006},
  timestamp = {Wed, 14 Nov 2018 00:00:00 +0100},
  biburl = {https://dblp.org/rec/journals/orl/EisenbrandS06.bib},
  bibsource = {dblp computer science bibliography, https://dblp.org},
  doi = {10.1016/j.orl.2005.09.008},
  _bib2doi_selected = {dblp:/rec/journals/orl/EisenbrandS06.bib},
  _bib2doi_confirmed = {true},
}

@inproceedings{jansen_et_al:LIPIcs.ESA.2025.48,
  author = {Jansen, Klaus and Pirotton, Lis and Tutas, Malte},
  title = {{The Support of Bin Packing Is Exponential}},
  booktitle = {ESA 2025},
  pages = {48:1--48:16},
  isbn = {978-3-95977-395-9},
  issn = {1868-8969},
  year = {2025},
  volume = {351},
  doi = {10.4230/LIPIcs.ESA.2025.48},
  timestamp = {Thu, 02 Oct 2025 01:00:00 +0200},
  biburl = {https://dblp.org/rec/conf/esa/JansenPT25.bib},
  bibsource = {dblp computer science bibliography, https://dblp.org},
  _bib2doi_selected = {dblp:/rec/conf/esa/JansenPT25.bib},
  _bib2doi_confirmed = {true},
}

@article{JansenKV20,
  author = {Klaus Jansen and Kim{-}Manuel Klein and Jos{\'{e}} Verschae},
  title = {Closing the Gap for Makespan Scheduling via Sparsification Techniques},
  journal = {Math. Oper. Res.},
  volume = {45},
  number = {4},
  pages = {1371--1392},
  year = {2020},
  timestamp = {Sat, 25 Dec 2021 00:00:00 +0100},
  biburl = {https://dblp.org/rec/journals/mor/JansenKV20.bib},
  bibsource = {dblp computer science bibliography, https://dblp.org},
  doi = {10.1287/moor.2019.1036},
  _bib2doi_selected = {dblp:/rec/journals/mor/JansenKV20.bib},
  _bib2doi_confirmed = {true},
}

@article{JansenK20,
  author = {Klaus Jansen and Kim{-}Manuel Klein},
  title = {About the Structure of the Integer Cone and Its Application to Bin Packing},
  journal = {Math. Oper. Res.},
  volume = {45},
  number = {4},
  pages = {1498--1511},
  year = {2020},
  timestamp = {Thu, 16 Sep 2021 01:00:00 +0200},
  biburl = {https://dblp.org/rec/journals/mor/JansenK20.bib},
  bibsource = {dblp computer science bibliography, https://dblp.org},
  doi = {10.1287/moor.2019.1040},
  _bib2doi_selected = {dblp:/rec/journals/mor/JansenK20.bib},
  _bib2doi_confirmed = {true},
}

@article{Padberg72,
  author = {Padberg, Manfred W.},
  title = {Equivalent knapsack-type formulations of bounded integer linear programs: An alternative approach},
  journal = {Naval Research Logistics Quarterly},
  volume = {19},
  number = {4},
  pages = {699-708},
  doi = {https://doi.org/10.1002/nav.3800190410},
  url = {https://onlinelibrary.wiley.com/doi/abs/10.1002/nav.3800190410},
  eprint = {https://onlinelibrary.wiley.com/doi/pdf/10.1002/nav.3800190410},
  year = {1972},
  _bib2doi_finished = {true},
}

@article{ROSENBERG74,
  title = {Aggregation of equations in integer programming},
  journal = {Discret. Math.},
  volume = {10},
  number = {2},
  pages = {325--341},
  year = {1974},
  issn = {0012-365X},
  doi = {10.1016/0012-365X(74)90126-5},
  url = {https://doi.org/10.1016/0012-365X(74)90126-5},
  author = {Ivo G. Rosenberg},
  timestamp = {Fri, 12 Feb 2021 00:00:00 +0100},
  biburl = {https://dblp.org/rec/journals/dm/Rosenberg74.bib},
  bibsource = {dblp computer science bibliography, https://dblp.org},
  _bib2doi_old_doi = {https://doi.org/10.1016/0012-365X(74)90126-5},
  _bib2doi_selected = {dblp:/rec/journals/dm/Rosenberg74.bib},
  _bib2doi_confirmed = {true},
  _bib2doi_finished = {true},
}

@article{kendall77,
  title = {Technical Note - Solving Integer Programming Problems by Aggregating Constraints},
  author = {Kenneth E. Kendall and Stanley Zionts},
  journal = {Oper. Res.},
  volume = {25},
  number = {2},
  pages = {346--351},
  year = {1977},
  publisher = {INFORMS},
  timestamp = {Tue, 31 Mar 2020 01:00:00 +0200},
  biburl = {https://dblp.org/rec/journals/ior/KendallZ77.bib},
  bibsource = {dblp computer science bibliography, https://dblp.org},
  doi = {10.1287/opre.25.2.346},
  url = {https://doi.org/10.1287/opre.25.2.346},
  _bib2doi_selected = {dblp:/rec/journals/ior/KendallZ77.bib},
  _bib2doi_confirmed = {true},
  _bib2doi_finished = {true},
}

@article{BRADLEY71,
  title = {Transformation of integer programs to knapsack problems},
  journal = {Discret. Math.},
  volume = {1},
  number = {1},
  pages = {29--45},
  year = {1971},
  issn = {0012-365X},
  doi = {10.1016/0012-365X(71)90005-7},
  url = {https://doi.org/10.1016/0012-365X(71)90005-7},
  author = {Gordon H. Bradley},
  timestamp = {Fri, 12 Feb 2021 00:00:00 +0100},
  biburl = {https://dblp.org/rec/journals/dm/Bradley71.bib},
  bibsource = {dblp computer science bibliography, https://dblp.org},
  _bib2doi_old_doi = {https://doi.org/10.1016/0012-365X(71)90005-7},
  _bib2doi_selected = {dblp:/rec/journals/dm/Bradley71.bib},
  _bib2doi_confirmed = {true},
  _bib2doi_finished = {true},
}

@article{HayesL83,
  author = {A. C. Hayes and David G. Larman},
  title = {The vertices of the knapsack polytope},
  journal = {Discret. Appl. Math.},
  volume = {6},
  number = {2},
  pages = {135--138},
  year = {1983},
  timestamp = {Thu, 11 Feb 2021 00:00:00 +0100},
  biburl = {https://dblp.org/rec/journals/dam/HayesL83.bib},
  bibsource = {dblp computer science bibliography, https://dblp.org},
  doi = {10.1016/0166-218X(83)90067-7},
  _bib2doi_selected = {dblp:/rec/journals/dam/HayesL83.bib},
  _bib2doi_confirmed = {true},
}

@article{chvtal1977aggregation,
  title = {Aggregation of inequalities in integer programming},
  author = {Chv{\'{a}}tal, V{\'{a}}clav and Hammer, Peter},
  journal = {Ann. Discrete Math},
  volume = {1},
  pages = {145--162},
  year = {1977},
  _bib2doi_finished = {true},
}

@article{JansenS11,
  author = {Klaus Jansen and Roberto Solis{-}Oba},
  title = {A Polynomial Time \emph{OPT} + 1 Algorithm for the Cutting Stock Problem with a Constant Number of Object Lengths},
  journal = {Math. Oper. Res.},
  volume = {36},
  number = {4},
  pages = {743--753},
  year = {2011},
  timestamp = {Sun, 28 May 2017 01:00:00 +0200},
  biburl = {https://dblp.org/rec/journals/mor/JansenS11.bib},
  bibsource = {dblp computer science bibliography, https://dblp.org},
  doi = {10.1287/moor.1110.0515},
  _bib2doi_selected = {dblp:/rec/journals/mor/JansenS11.bib},
  _bib2doi_confirmed = {true},
}

@article{JansenK19,
  author = {Klaus Jansen and Kim{-}Manuel Klein},
  title = {A Robust {AFPTAS} for Online Bin Packing with Polynomial Migration},
  journal = {{SIAM} J. Discret. Math.},
  volume = {33},
  number = {4},
  pages = {2062--2091},
  year = {2019},
  timestamp = {Sun, 19 Jan 2025 00:00:00 +0100},
  biburl = {https://dblp.org/rec/journals/siamdm/JansenK19.bib},
  bibsource = {dblp computer science bibliography, https://dblp.org},
  doi = {10.1137/17M1122529},
  _bib2doi_selected = {dblp:/rec/journals/siamdm/JansenK19.bib},
  _bib2doi_confirmed = {true},
}

@article{Jansen10,
  author = {Klaus Jansen},
  title = {An {EPTAS} for Scheduling Jobs on Uniform Processors: Using an {MILP} Relaxation with a Constant Number of Integral Variables},
  journal = {{SIAM} J. Discret. Math.},
  volume = {24},
  number = {2},
  pages = {457--485},
  year = {2010},
  timestamp = {Sat, 25 Apr 2020 01:00:00 +0200},
  biburl = {https://dblp.org/rec/journals/siamdm/Jansen10.bib},
  bibsource = {dblp computer science bibliography, https://dblp.org},
  doi = {10.1137/090749451},
  _bib2doi_selected = {dblp:/rec/journals/siamdm/Jansen10.bib},
  _bib2doi_confirmed = {true},
}

@inproceedings{BJMS23,
  author = {Sebastian Berndt and Hauke Brinkop and Klaus Jansen and Matthias Mnich and Tobias Stamm},
  title = {New Support Size Bounds for Integer Programming, Applied to Makespan Minimization on Uniformly Related Machines},
  booktitle = {{ISAAC}},
  series = {LIPIcs},
  volume = {283},
  pages = {13:1--13:18},
  publisher = {Schloss Dagstuhl - Leibniz-Zentrum f{\"{u}}r Informatik},
  year = {2023},
  timestamp = {Tue, 07 May 2024 01:00:00 +0200},
  biburl = {https://dblp.org/rec/conf/isaac/0001BJMS23.bib},
  bibsource = {dblp computer science bibliography, https://dblp.org},
  doi = {10.4230/LIPIcs.ISAAC.2023.13},
  _bib2doi_selected = {dblp:/rec/conf/isaac/0001BJMS23.bib},
  _bib2doi_confirmed = {true},
}

@article{Poirion19,
  author = {Pierre{-}Louis Poirion},
  title = {Optimal constraints aggregation method for {ILP}},
  journal = {Discret. Appl. Math.},
  volume = {262},
  pages = {148--157},
  year = {2019},
  timestamp = {Thu, 20 Feb 2020 00:00:00 +0100},
  biburl = {https://dblp.org/rec/journals/dam/Poirion19.bib},
  bibsource = {dblp computer science bibliography, https://dblp.org},
  doi = {10.1016/j.dam.2019.02.014},
  _bib2doi_selected = {dblp:/rec/journals/dam/Poirion19.bib},
  _bib2doi_confirmed = {true},
}

@article{BradleyHW74,
  author = {Gordon H. Bradley and Peter L. Hammer and Laurence A. Wolsey},
  title = {Coefficient reduction for inequalities in 0-1 variables},
  journal = {Math. Program.},
  volume = {7},
  number = {1},
  pages = {263--282},
  year = {1974},
  timestamp = {Tue, 18 Aug 2020 01:00:00 +0200},
  biburl = {https://dblp.org/rec/journals/mp/BradleyHW74.bib},
  bibsource = {dblp computer science bibliography, https://dblp.org},
  doi = {10.1007/BF01585527},
  _bib2doi_selected = {dblp:/rec/journals/mp/BradleyHW74.bib},
  _bib2doi_confirmed = {true},
}

@article{GloverB95,
  author = {Fred W. Glover and Djangir A. Babayev},
  title = {New results for aggregating integer-valued equations},
  journal = {Ann. Oper. Res.},
  volume = {58},
  number = {3},
  pages = {227--242},
  year = {1995},
  timestamp = {Thu, 13 Aug 2020 01:00:00 +0200},
  biburl = {https://dblp.org/rec/journals/anor/GloverB95.bib},
  bibsource = {dblp computer science bibliography, https://dblp.org},
  doi = {10.1007/BF02032133},
  _bib2doi_selected = {dblp:/rec/journals/anor/GloverB95.bib},
  _bib2doi_confirmed = {true},
}

@article{GloverW72,
  author = {Fred W. Glover and Robert E. D. Woolsey},
  title = {Aggregating diophantine equations},
  journal = {Z. Oper. Research},
  volume = {16},
  number = {1},
  pages = {1--10},
  year = {1972},
  timestamp = {Mon, 18 May 2020 01:00:00 +0200},
  biburl = {https://dblp.org/rec/journals/mmor/GloverW72.bib},
  bibsource = {dblp computer science bibliography, https://dblp.org},
  doi = {10.1007/BF01917186},
  _bib2doi_selected = {dblp:/rec/journals/mmor/GloverW72.bib},
  _bib2doi_confirmed = {true},
}

@article{elmaghraby1970treatment,
  title = {On the treatment of stock cutting problems as diophantine programs},
  author = {Elmaghraby, S.E. and Wig, M.K.},
  journal = {Operations Research Report},
  volume = {61},
  year = {1970},
  _bib2doi_finished = {true},
}

@article{FrankT87,
  author = {Andr{\'{a}}s Frank and {\'{E}}va Tardos},
  title = {An application of simultaneous Diophantine approximation in combinatorial optimization},
  journal = {Comb.},
  volume = {7},
  number = {1},
  pages = {49--65},
  year = {1987},
  timestamp = {Wed, 22 Jul 2020 01:00:00 +0200},
  biburl = {https://dblp.org/rec/journals/combinatorica/FrankT87.bib},
  bibsource = {dblp computer science bibliography, https://dblp.org},
  doi = {10.1007/BF02579200},
  _bib2doi_selected = {dblp:/rec/journals/combinatorica/FrankT87.bib},
  _bib2doi_confirmed = {true},
}

@article{EisenbrandHKKLO23,
  author = {Friedrich Eisenbrand and Christoph Hunkenschr{\"{o}}der and Kim{-}Manuel Klein and Martin Kouteck{\'{y}} and Asaf Levin and Shmuel Onn},
  title = {Reducibility bounds of objective functions over the integers},
  journal = {Oper. Res. Lett.},
  volume = {51},
  number = {6},
  pages = {595--598},
  year = {2023},
  timestamp = {Sat, 13 Jan 2024 00:00:00 +0100},
  biburl = {https://dblp.org/rec/journals/orl/EisenbrandHKKLO23.bib},
  bibsource = {dblp computer science bibliography, https://dblp.org},
  doi = {10.1016/j.orl.2023.10.001},
  _bib2doi_selected = {dblp:/rec/journals/orl/EisenbrandHKKLO23.bib},
  _bib2doi_confirmed = {true},
}

@inproceedings{JansenKZ25,
  author       = {Klaus Jansen and
                  Kai Kahler and
                  Esther Zwanger},
  title        = {Exact and Approximate High-Multiplicity Scheduling on Identical Machines},
  booktitle    = {{CIAC} {(1)}},
  series       = {Lecture Notes in Computer Science},
  volume       = {15679},
  pages        = {1--17},
  publisher    = {Springer},
  year         = {2025}
}

@article{AlievLOO17,
  author = {Iskander Aliev and Jes{\'{u}}s A. De Loera and Timm Oertel and Christopher O'Neill},
  title = {Sparse Solutions of Linear Diophantine Equations},
  journal = {{SIAM} J. Appl. Algebra Geom.},
  volume = {1},
  number = {1},
  pages = {239--253},
  year = {2017},
  timestamp = {Mon, 26 Oct 2020 00:00:00 +0100},
  biburl = {https://dblp.org/rec/journals/siaga/AlievLOO17.bib},
  bibsource = {dblp computer science bibliography, https://dblp.org},
  doi = {10.1137/16M1083876},
  _bib2doi_selected = {dblp:/rec/journals/siaga/AlievLOO17.bib},
  _bib2doi_confirmed = {true},
}

@article{AlievLEOW18,
  author = {Iskander Aliev and Jes{\'{u}}s A. De Loera and Friedrich Eisenbrand and Timm Oertel and Robert Weismantel},
  title = {The Support of Integer Optimal Solutions},
  journal = {{SIAM} J. Optim.},
  volume = {28},
  number = {3},
  pages = {2152--2157},
  year = {2018},
  timestamp = {Sun, 19 Jan 2025 00:00:00 +0100},
  biburl = {https://dblp.org/rec/journals/siamjo/AlievLEOW18.bib},
  bibsource = {dblp computer science bibliography, https://dblp.org},
  doi = {10.1137/17M1162792},
  _bib2doi_selected = {dblp:/rec/journals/siamjo/AlievLEOW18.bib},
  _bib2doi_confirmed = {true},
}
\end{document}